\newenvironment{proof}[1][Proof]{\par\noindent\textbf{#1.} }{\hfill$\square$\par}
\newtheorem{remark}{Remark}
\newtheorem{lemma}{Lemma}
\newtheorem{algorithm2}{Algorithm}
\definecolor{dblue}{rgb}{0 0 0.7}
\definecolor{red}{rgb}{1 0 0}
\definecolor{MatlabBlue}{rgb}{0     , 0.4470, 0.7410}
\definecolor{MatlabRed}{rgb}{0.6350 0.0780 0.1840}
\definecolor{MatlabOrange}{rgb}{0.8500, 0.3250, 0.0980}
\definecolor{MatlabYellow}{rgb}{0.9290, 0.6940, 0.1250}
\definecolor{MatlabPurple}{rgb}{0.4940, 0.1840, 0.5560}
\definecolor{MatlabGreen}{rgb}{0.4660, 0.6740, 0.1880}
\definecolor{MatlabBabyBlue}{rgb}{0.3010, 0.7450, 0.9330}
\definecolor{MatlabGray}{rgb}{0.5, 0.5, 0.5}
\definecolor{MatlabLightGray}{rgb}{0.75, 0.75, 0.75}
\definecolor{MatlabBlack}{rgb}{0, 0, 0}
\definecolor{MatlabLightGray4}{rgb}{0.875, 0.875, 0.875}
\definecolor{MatlabLightGray3}{rgb}{0.85, 0.85, 0.85}
\definecolor{MatlabLightGray2}{rgb}{0.775, 0.775, 0.775}
\definecolor{MatlabLightGray1}{rgb}{0.7, 0.7, 0.7}
\definecolor{MatlabGray20}{rgb}{0.2, 0.2, 0.2}
\definecolor{MatlabGray30}{rgb}{0.3, 0.3, 0.3}
\definecolor{MatlabGray40}{rgb}{0.4, 0.4, 0.4}
\definecolor{MatlabGray50}{rgb}{0.5, 0.5, 0.5}
\definecolor{MatlabGray60}{rgb}{0.6, 0.6, 0.6}
\definecolor{MatlabGray70}{rgb}{0.7, 0.7, 0.7}
\definecolor{MatlabGray80}{rgb}{0.8, 0.8, 0.8}
\definecolor{MatlabGray85}{rgb}{0.85, 0.85, 0.85}
\definecolor{MatlabGray90}{rgb}{0.9, 0.9, 0.9}
\newcommand{\tikzline}[1]{(\protect\tikz[baseline=-0.6ex,x=1pt,y=1pt, line width=0.4mm]{ \protect\draw[#1] [-] (0,0) -- (10,0);})}
\newcommand{\tikzmark}[1]{(\protect\tikz[baseline=-0.6ex,x=1pt,y=1pt]{ \protect\draw[color=#1, fill=#1] (5,0) circle (1pt)})}
\newcommand{\hop}{\mathsf{H}}
\begin{document}
\begin{frontmatter}

\title{Frequency domain identification for multivariable motion control systems:
Applied to a prototype wafer stage} 
% Title, preferably not more than 10 words.

\thanks[footnoteinfo]{This project is funded by Holland High Tech | TKI HSTM via the PPP Innovation Scheme (PPP-I) for public-private partnerships.}

\author[First]{M. van der Hulst}
\author[First]{R. A. Gonz\'alez}
\author[First]{K. Classens}
\author[First]{P. Tacx}
\author[Second]{N. Dirkx}
\author[Second]{J. van de Wijdeven}
\author[First,Third]{T. Oomen}

\address[First]{Dept. of Mechanical Engineering, Eindhoven University of Technology, The Netherlands}
\address[Second]{ASML, Veldhoven, The Netherlands}
\address[Third]{Delft Center for Systems and Control, Delft University of Technology The Netherlands}

\begin{abstract} 
Multivariable parametric models are essential for optimizing the performance of high-tech systems. The main objective of this paper is to develop an identification strategy that provides accurate parametric models for complex multivariable systems. To achieve this, an additive model structure is adopted, offering advantages over traditional black-box model structures when considering physical systems. The introduced method minimizes a weighted least-squares criterion and uses an iterative linear regression algorithm to solve the estimation problem, achieving local optimality upon convergence. Experimental validation is conducted on a prototype wafer-stage system, featuring a large number of spatially distributed actuators and sensors and exhibiting complex flexible dynamic behavior, to evaluate performance and demonstrate the effectiveness of the proposed method.
\end{abstract}

\begin{keyword}
Parameter estimation, system identification, multivariable systems, frequency
response function
\end{keyword}

\end{frontmatter}
%===============================================================================

\section{Introduction}
System identification involves developing mathematical models using experimental data, often incorporating insights from physical principles \citep{Ljung1999SystemUser}. Data-driven parametric models of multi-input multi-output (MIMO) systems are essential to optimize performance of engineered systems as they enable the design of high-performance controllers and observers, provide design validation and feedback, and facilitate online monitoring and fault diagnosis \citep{Steinbuch2022MotionLaw}.

Traditional linear system identification approaches for multivariable systems often rely on black-box model structures that do not consider the underlying structure of the considered physical system. Examples include rational common denominator models and matrix fractional descriptions (MFDs) \citep{Pintelon2012SystemIdentification}. The literature on these model parameterizations is extensive \citep{Glover1974ParametrizationsIdentifiability, Correa1984Pseudo-canonicalModels, Vayssettes2016NewApproach}, yet may not provide the most parsimonious or physically relevant model descriptions when considering practical applications. Many physical systems are more naturally described by a sum of low-order transfer functions. Examples are found in vibrational analysis \citep{Vayssettes2015NewStructures, Zhang2019VariationalEngineering, Dorosti2018IterativeApproach} and control of flexible motion systems \citep{Voorhoeve2021IdentifyingStage, Tacx2024IdentificationOutputs}, where models are often represented as a sum of transfer functions with distinct denominators, corresponding to the individual resonant modes of the system \citep{Gawronski2004AdvancedStructures}. Similar approaches are found in thermal analysis of machine frames \citep{Zhu2008RobustConcept}, RLC circuits \citep{Lange2021BroadbandShape} and acoustic modeling of room responses \citep{Jian2022AcousticInterpolation}. The estimation of additive transfer function models, which are related to unfactored transfer functions by a partial fraction expansion, offers several advantages. These models enable more efficient parameterization by minimizing the number of parameters needed to represent the system, thereby reducing model complexity and enhancing the statistical estimation properties \citep{Soderstrom2001SystemIdentification}.  Furthermore, they provide enhanced physical insight for fault diagnosis \citep{Classens2022FaultResonances} and improve numerical conditioning, which is crucial for the parametric identification of stiff and high-order systems \citep{Gilson2018AIdentification}.

When identifying physical systems, estimating continuous-time models offers distinct advantages over discrete-time models. Continuous-time models facilitate the integration of \textit{a priori} knowledge, such as relative degree, and provide more interpretable parameters which directly correspond to physical quantities \citep{Garnier2015DirectApplications}. Herein, the frequency-domain approach for the parametric identification of continuous-time models has become increasingly popular. Frequency-domain system identification offers several advantages, including data and computational efficiency, flexible data processing, nonparametric noise model estimation, and direct interpretation of system dynamics \citep{Pintelon2012SystemIdentification}. 

Many MIMO frequency-domain identification strategies have been developed for models parametrized in non-additive structures. These methods can largely be categorized into pseudo-linear regression-based approaches \citep{Blom2010MultivariableRegression, Sanathanan1961TransferPolynomials} and gradient descent methods \citep{Bayard1994High-orderResults}. In contrast, the estimation of additive model parametrizations has primarily been explored in single-input single-output (SISO) approaches. One such method is vector fitting \citep{Semlyen1999RationalFitting}, which considers fitting first-order pole models. Recent advancements in additive system identification include the direct continuous-time identification method introduced in \cite{Gonzalez2024IdentificationClosed-loop}, which is based on the simplified refined instrumental variable method (SRIVC) \cite{Young1980RefinedAnalysis}, as well as a block coordinate descent approach with variants for both offline and online parameter estimation \cite{Gonzalez2023ParsimoniousApproach, Classens2024RecursiveSystems}.

Although additive identification offers several advantages for estimating models of physical systems, most existing methods focus on the SISO setting, while many practical applications require a MIMO formulation. This paper aims to introduce a comprehensive identification method for estimating additive linear continuous-time MIMO systems using frequency-domain data. The main contributions of this paper are:

\begin{itemize}
    \item[C1] A frequency-domain refined instrumental variable method for estimating continuous-time MIMO systems in additive transfer function form.
    \item[C2] Experimental validation of the developed identification on a prototype wafer-stage system. 
\end{itemize}
This paper is organized as follows. First, Section 2 formally introduces the additive model structure and outlines the identification problem considered. In section 3, the identification strategy is presented with experimental validation in Section 4. Finally,  conclusions are given in Section 5.

\textit{Notation:} Scalars, vectors and matrices are written as $x$, $\mathbf{x}$ and $\mathbf{X}$, respectively. The imaginary unit is denoted by $ j^2 = -1 $, and for $\mathbf{z}\in\mathbb{C}^n$, the operation $\Re\{\mathbf{z}\}$ returns the real part of the complex vector $\mathbf{z}$. For a matrix $ \mathbf{A} $, its transpose is written as $ \mathbf{A}^{\top} $, and its Hermitian (conjugate transpose) as $ \mathbf{A}^{\hop} $. If $ \mathbf{x} \in \mathbb{C}^n $ and $ \mathbf{Q} \in \mathbb{C}^{n \times n} $ is a Hermitian matrix, then the weighted 2-norm is given by $ \|\mathbf{x}\|_{\mathbf{Q}} = \sqrt{\mathbf{x}^{\hop} \mathbf{Q} \mathbf{x}} $. For $ \mathbf{X} = [\mathbf{x}_1, \ldots, \mathbf{x}_n] $, with $ \mathbf{x}_i \in \mathbb{C}^n $, the operation $ \operatorname{vec}(\mathbf{X}) = [\mathbf{x}^\top_1, \ldots, \mathbf{x}^\top_n]^\top $ restructures the matrix into a vector by stacking its columns.

\section{Setup and Problem formulation}
In this section, the experimental setup is presented and the additive model structure is formally introduced. Finally, the identification problem considered is formulated.

\subsection{Experimental setup: prototype wafer-stage system}
The considered experimental setup depicted in Figure \ref{fig: oat_overview} is a prototype wafer stage system.  The system is actively controlled in six motion degrees of freedom at a sampling rate of 10 kHz, achieving accuracy in the sub-micrometer range. The stage is magnetically levitated using gravity compensators, achieving a mid-air equilibrium and eliminating any mechanical connections to the fixed world. These systems exhibit pronounced flexible dynamics, which pose significant challenges for controller design, model updating, design feedback, and monitoring techniques. The availability of accurate mathematical models that capture the dynamics of the flexible multivariable system is crucial for effectively addressing these challenges.

The system contains 17 actuators: 13 are in the $z$ direction and two each for the $x$ and $y$ directions. Furthermore, the system includes 7 sensors, 4 for the $z$-direction, 2 for the $x$-directions, and a single sensor for the $y$-direction.  Only out-of-plane motions, that is, translation along the $z$ axis, and rotations around the $x$ and $y$ axes, are considered in this paper to facilitate the exposition. An overview of the sensors and actuators considered is provided in Figure \ref{fig: oat_schematic}.

\begin{figure}[t]
    \centering
    \includegraphics[width=0.85\columnwidth]{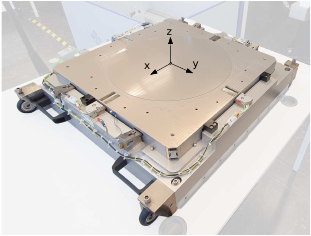}
    \caption{Experimental setup featuring a prototype wafer-stage system.}
    \label{fig: oat_overview}

    \centering
    \includegraphics{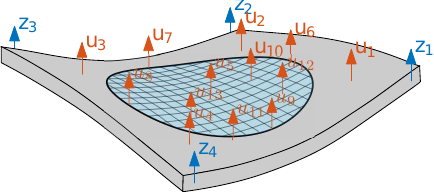}
    \caption{Schematic overview of the featured actuators $u_i$ and sensors $z_i$ in the out-of-plane direction.}
    \label{fig: oat_schematic}
\end{figure}

\subsection{Model structure} \label{sec: additive model setup}
To model the input/output dynamics of the wafer-stage system, an additive model structure is adopted, which is formally introduced in this section. Consider the linear and time-invariant (LTI) model of a MIMO system with $n_{\mathrm{u}}$ inputs and $n_{\mathrm{y}}$ outputs in additive form 
\begin{equation} \label{eq: FREQ - true plant}
\mathbf{P}(s,\boldsymbol{\beta}) = \sum_{i=1}^K\mathbf{P}_i(s,\boldsymbol{\theta}_i),
\end{equation}
with $K$ the number of submodels, $s$ the Laplace variable and $\boldsymbol{\beta}$ and $\boldsymbol{\theta}_i$ the joint and submodel parameter vector.  Each submodel $\mathbf{P}_i(s,\boldsymbol{\theta}_i)$ is parametrized according to
\begin{equation}
\mathbf{P}_i(s,\boldsymbol{\beta}) = \frac{1}{ s^{\ell_i} A_i(s)}\mathbf{B}_i(s),
\end{equation} 
where at most one submodel may include $\ell_i>0$ poles at the origin. The scalar denominator polynomial $A_i(s)$ and the matrix numerator polynomial $\mathbf{B}_i(s)$ are such that no complex number $z$ simultaneously satisfies $A_i(z) = 0$ and $ \mathbf{B}_i(z) = \mathbf{0}$. To ensure a unique characterization of $ \left\{\mathbf{P}_i(s)\right\}_{i=1}^K $, it is assumed that at most one submodel $ \mathbf{P}_i(s) $ is biproper.  The $A_i(s)$ and $\mathbf{B}_i(s)$ polynomials are parametrized as
\begin{align} 
A_i(s) &= 1 + a_{i,1}s + \ldots + a_{i,n_i}s^{n_i}, \label{eq: FREQ - plant polynomial A}\\
\mathbf{B}_i(s) &= \mathbf{B}_{i,0} + \mathbf{B}_{i,1} s+ \ldots + \mathbf{B}_{i,m_i}s^{m_i}, \label{eq: FREQ - plant polynomial B}
\end{align}
where the $A_i(s)$ polynomials are stable, i.e., all roots lie in the left-half plane, and they not share any common roots. The polynomials $A_i(s)$ and $\mathbf{B}_i(s)$ are jointly described by the parameter vector
\begin{align} \label{eq: FREQ - parameter vector}
\boldsymbol{\beta} = \left[ \boldsymbol{\theta}^{\top}_1,\ \dots,\ \boldsymbol{\theta}^{\top}_K \right]^{\top},
\end{align}
where $\boldsymbol{\theta}_i$ for $i = 1,\ldots,K$ contains the parameters of the $i$th submodel
\begin{align} \label{eq: FREQ - parameter vector submodel}
\hspace{-0.75em} \boldsymbol{\theta}_i = \left[ a_{i,1}, \dots, a_{i,n_i}, \operatorname{vec}\left(\mathbf{B}_{i,0}\right)^{\top}, \dots, \operatorname{vec}\left(\mathbf{B}_{i,m_i}\right)^{\top}\right]^{\top} \hspace{-0.5em}.
\end{align}

\subsection{Identification problem}
A dataset of noisy plant frequency response function (FRF) measurements $\mathbf{G}(\omega_k) \in \mathbb{C}^{n_{\mathrm{y}} \times n_{\mathrm{u}}}$ of length $N$, is assumed to be available for the estimation problem. To estimate continuous-time models from the measured FRF, the pseudo-continuous-time setting is adopted (see \cite{Wang2008IdentificationData}, Chapter 8 for details).  The identification problem is formulated based on the matrix residual, which is computed as the difference between the FRF measurement and the model, according to
\begin{equation} \label{eq: FREQ - residual}
\mathbf{E}\left(\omega_k, \boldsymbol{\beta}\right) = \mathbf{G}\left(\omega_k\right) - \mathbf{P}\left(\xi_k, \boldsymbol{\beta} \right),
\end{equation}
where $\xi_k = j\omega_k$. The parameter vector estimate $\hat{\boldsymbol{\beta}}$ is obtained as the minimizer of the weighted least-squares criterion
\begin{equation} \label{eq: FREQ - optimization problem}
\hat{\boldsymbol{\beta}}=\underset{\boldsymbol{\beta}}{\arg \min }\frac{1}{2N}\sum_{k=1}^N\left\| \operatorname{vec}\bigl(\mathbf{E}\left(\omega_k, \boldsymbol{\beta}\right)\bigr) \right\|_{\mathbf{W}(\omega_k)}^2,
\end{equation}
where $\mathbf{W}(\omega_k) \in \mathbb{C}^{n_{\mathrm{u}}n_{\mathrm{y}} \times n_{\mathrm{u}}n_{\mathrm{y}}}$ is a frequency-dependent weighting matrix. The problem considered is to estimate additive models as described by \eqref{eq: FREQ - true plant}, that
minimizes the cost function in \eqref{eq: FREQ - optimization problem}, given a dataset of noisy FRF measurements.

\section{Identification strategy for additive MIMO systems}
In this section, an iterative linear regression method is introduced to solve the nonlinear and nonconvex optimization problem \eqref{eq: FREQ - optimization problem}, thereby constituting contribution C1.

\subsection{Criterion for optimality}
The minimizers of the cost function in \eqref{eq: FREQ - optimization problem} satisfy the first-order optimality condition 
\begin{equation} \label{eq: FREQ - first-order optimality}
\begin{aligned}
\mathbf{0} = \frac{1}{N}\sum_{k=1}^N \Re \left\{ \hat{\mathbf{\Phi}}\left(\omega_k, \boldsymbol{\beta}\right) \mathbf{W}(\omega_k) \operatorname{vec}\bigl(\mathbf{E}\left(\omega_k, \boldsymbol{\beta}\right)\bigr) \right\},
\end{aligned}
\end{equation}
with the gradient 
\begin{equation}
\hat{\mathbf{\Phi}}\left(\omega_k, \boldsymbol{\beta}\right) = \left(\frac{\partial \operatorname{vec}\bigl(\mathbf{E}\left(\omega_k, \boldsymbol{\beta}\right)\bigr)}{\partial \boldsymbol{\beta}^\top}\right)^\hop.     
\end{equation}
For the considered additive model structure the gradient corresponds to
\begin{equation} \label{eq: FREQ - instrument matrix}
    \hat{\mathbf{\Phi}}\left(\omega_k, \boldsymbol{\beta}\right) = \Bigl[ \begin{array}{ccc} \hat{\mathbf{\Phi}}_{1}^\hop\left(\omega_k, \boldsymbol{\theta}_1\right) & \ldots & \hat{\mathbf{\Phi}}_{K}^\hop\left(\omega_k, \boldsymbol{\theta}_K\right)\end{array}\Bigr]^\hop, 
\end{equation}
where $\hat{\mathbf{\Phi}}_{i}\left(\omega_k, \boldsymbol{\theta}_i\right)$ for $i = 1,\ldots,K$ is given by
\begin{equation} \label{eq: FREQ - instrument matrix submodel}
\begin{aligned} 
\hat{\boldsymbol{\Phi}}_{i}(\omega_k,\boldsymbol{\beta}) = 
\Bigg[
& \frac{-\xi_k \mathbf{p}_i(\xi_k, \boldsymbol{\theta}_i)}{\xi_k^{\ell_i}A_i(\xi_k)}, 
\ldots, 
\frac{-\xi_k^{n_i} \mathbf{p}_{i}(\xi_k, \boldsymbol{\theta}_i)}{\xi_k^{\ell_i}A_i(\xi_k)}, \\
& \frac{\mathbf{I}_{n_{\mathrm{u}}n_{\mathrm{y}}}}{\xi_k^{\ell_i}A_i(\xi_k)}, 
\ldots, 
\frac{\xi_k^{m_i} \mathbf{I}_{n_{\mathrm{u}}n_{\mathrm{y}}}}{\xi_k^{\ell_i}A_i(\xi_k)} 
\Bigg]^\hop,
\end{aligned}
\end{equation}
with $\mathbf{p}_i(\xi_k, \boldsymbol{\theta}_i) = \operatorname{vec}(\mathbf{P}_i(\xi_k, \boldsymbol{\theta}_i))$ the vectorized plant of the $i$th submodel.  In the following subsections, the first-order optimality condition \eqref{eq: FREQ - first-order optimality} will be exploited to derive an estimator for the parameter vector $\hat{\boldsymbol{\beta}}$.

\subsection{Refined instrumental variables for additive systems}
The condition in (\ref{eq: FREQ - first-order optimality}) is non-linear in the parameter vector $\boldsymbol{\beta}$. A solution is obtained by reformulating \eqref{eq: FREQ - residual} to a pseudolinear form which enables the refined instrumental variables approach. For each submodule in the additive model structure, the residual can be reformulated into an unique pseudolinear regression, as stated in the following lemma. \\

\begin{lemma} The pseudolinear regression form of the residual \eqref{eq: FREQ - residual} corresponding to the $i$th submodel is expressed as
\begin{equation} \label{eq: FREQ - subproblems regression form}
\operatorname{vec}\bigl(\mathbf{E}\left(\omega_k, \boldsymbol{\beta}\right)\bigr)  = \tilde{\mathbf{g}}_{f,i}(\omega_k, \boldsymbol{\beta}) -\boldsymbol{\Phi}_{i}^{\top}\left(\omega_k, \boldsymbol{\beta}\right) \boldsymbol{\theta}_i, 
\end{equation}
with the regressor
\begin{equation}
\begin{aligned}
\boldsymbol{\Phi}_i(\omega_k,\boldsymbol{\beta}) = 
\Bigg[ 
&\frac{-\xi_k \tilde{\mathbf{g}}_{i}(\omega_k, \boldsymbol{\beta})}{A_i(\xi_k)} , 
\ldots, 
\frac{-\xi_k^{n_i}\tilde{\mathbf{g}}_{i}(\omega_k, \boldsymbol{\beta})}{A_i(\xi_k)} , \\
&\frac{\mathbf{I}_{n_{\mathrm{u}}n_{\mathrm{y}}}}{\xi^{\ell_i}A_i(\xi_k)}, 
\ldots, 
\frac{\xi_k^{m_i}\mathbf{I}_{n_{\mathrm{u}}n_{\mathrm{y}}}}{\xi_k^{\ell_i}A_i(\xi_k)} 
\Bigg]^\top,
\end{aligned}
\end{equation}
and where $\tilde{\mathbf{g}}_{f,i}(\omega_k, \boldsymbol{\beta}) = A^{-1}_i(\xi_k)\tilde{\mathbf{g}}_{i}(\omega_k, \boldsymbol{\beta})$ with $\tilde{\mathbf{g}}_i(\omega_k, \boldsymbol{\beta}) = \operatorname{vec}(\tilde{\mathbf{G}}_i(\omega_k, \boldsymbol{\beta}))$ the  residual plant of the $i$th submodel, defined by
\begin{equation} \label{eq: FREQ - residual plant}
\tilde{\mathbf{G}}^{}_{i}\left(\omega_k, \boldsymbol{\beta}\right) = \mathbf{G}^{}(\omega_k) - \sum_{\substack{\ell=1, \ldots, K \\ \ell \neq i}} \mathbf{P}_\ell(\xi_k, \boldsymbol{\theta}_\ell).
\end{equation}
\end{lemma}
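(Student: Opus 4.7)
The plan is to prove the lemma by direct algebraic manipulation, isolating the $i$th submodel and then applying the standard refined-instrumental-variables rewrite. The statement is essentially a bookkeeping identity, so the main task is to track the polynomial coefficients carefully rather than to overcome any deep obstacle.

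First I would use the definition of the residual plant in \eqref{eq: FREQ - residual plant} to split the residual \eqref{eq: FREQ - residual} as
\begin{equation*}
\mathbf{E}(\omega_k,\boldsymbol{\beta}) = \mathbf{G}(\omega_k) - \sum_{\ell\neq i}\mathbf{P}_\ell(\xi_k,\boldsymbol{\theta}_\ell) - \mathbf{P}_i(\xi_k,\boldsymbol{\theta}_i) = \tilde{\mathbf{G}}_i(\omega_k,\boldsymbol{\beta}) - \mathbf{P}_i(\xi_k,\boldsymbol{\theta}_i),
\end{equation*}
so that after vectorization $\operatorname{vec}(\mathbf{E}) = \tilde{\mathbf{g}}_i - \mathbf{p}_i$. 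This isolates the parameter subvector $\boldsymbol{\theta}_i$ on the right-hand side: $\tilde{\mathbf{g}}_i$ depends on the other submodels' parameters but not on $\boldsymbol{\theta}_i$, while $\mathbf{p}_i$ carries the full $\boldsymbol{\theta}_i$ dependence.

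Next I would apply the RIV ``add and subtract'' trick. Using the parametrization \eqref{eq: FREQ - plant polynomial A}, the identity $A_i(\xi_k) - 1 = \sum_{j=1}^{n_i} a_{i,j}\xi_k^j$ together with $\tilde{\mathbf{g}}_i = A_i(\xi_k)\tilde{\mathbf{g}}_{f,i}$ yields
\begin{equation*}
\tilde{\mathbf{g}}_i = \tilde{\mathbf{g}}_{f,i} + \sum_{j=1}^{n_i} a_{i,j}\, \xi_k^j\, \tilde{\mathbf{g}}_{f,i},
\end{equation*}
which makes the denominator parameters enter linearly through the filtered quantity $\tilde{\mathbf{g}}_{f,i}$. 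Using \eqref{eq: FREQ - plant polynomial B}, the numerator term expands as
\begin{equation*}
\mathbf{p}_i = \frac{1}{\xi_k^{\ell_i} A_i(\xi_k)} \sum_{j=0}^{m_i} \xi_k^j \operatorname{vec}(\mathbf{B}_{i,j}) = \sum_{j=0}^{m_i} \frac{\xi_k^j \mathbf{I}_{n_{\mathrm{u}}n_{\mathrm{y}}}}{\xi_k^{\ell_i} A_i(\xi_k)}\operatorname{vec}(\mathbf{B}_{i,j}),
\end{equation*}
so the numerator parameters also enter linearly, now through the filter $1/(\xi_k^{\ell_i} A_i(\xi_k))$.

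Finally I would combine the two expansions and read off the regression form. Substituting both into $\operatorname{vec}(\mathbf{E}) = \tilde{\mathbf{g}}_i - \mathbf{p}_i$ gives
\begin{equation*}
\operatorname{vec}(\mathbf{E}) = \tilde{\mathbf{g}}_{f,i} - \left( -\sum_{j=1}^{n_i} a_{i,j}\xi_k^j \tilde{\mathbf{g}}_{f,i} + \sum_{j=0}^{m_i} \frac{\xi_k^j \mathbf{I}_{n_{\mathrm{u}}n_{\mathrm{y}}}}{\xi_k^{\ell_i} A_i(\xi_k)}\operatorname{vec}(\mathbf{B}_{i,j}) \right),
\end{equation*}
and identifying the bracketed expression with $\boldsymbol{\Phi}_i^\top(\omega_k,\boldsymbol{\beta})\boldsymbol{\theta}_i$ using the block structure of $\boldsymbol{\theta}_i$ in \eqref{eq: FREQ - parameter vector submodel} recovers \eqref{eq: FREQ - subproblems regression form} exactly. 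The only mildly delicate point is keeping the block dimensions consistent — the first $n_i$ columns of $\boldsymbol{\Phi}_i^\top$ are $n_{\mathrm{u}}n_{\mathrm{y}}$-dimensional vectors multiplying scalars $a_{i,j}$, while the remaining blocks are $n_{\mathrm{u}}n_{\mathrm{y}}\times n_{\mathrm{u}}n_{\mathrm{y}}$ matrices multiplying the vectorized coefficient matrices $\operatorname{vec}(\mathbf{B}_{i,j})$ — but this is purely a matter of notation rather than a genuine obstacle.
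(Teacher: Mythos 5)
Your proposal is correct and follows essentially the same route as the paper's proof: isolate the $i$th submodel via the residual plant $\tilde{\mathbf{G}}_i$, expand $A_i(\xi_k)$ and $\mathbf{B}_i(\xi_k)$ so the coefficients enter linearly through the filtered quantities $\tilde{\mathbf{g}}_{f,i}$ and $1/(\xi_k^{\ell_i}A_i(\xi_k))$, and identify the result with $\boldsymbol{\Phi}_i^{\top}\boldsymbol{\theta}_i$. Your writing $A_i(\xi_k)-1=\sum_{j=1}^{n_i}a_{i,j}\xi_k^{j}$ is just a slightly more explicit rendering of the same expansion the paper performs after factoring out $1/(\xi_k^{\ell_i}A_i(\xi_k))$.
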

\begin{proof} \textit{
The residual \eqref{eq: FREQ - residual} is rewritten for $i = 1,\ldots,K$ according to
\begin{align}
\mathbf{E}\left(\omega_k, \boldsymbol{\beta} \right) & =\tilde{\mathbf{G}}_i\left(\omega_k,\boldsymbol{\beta}\right)-\frac{\mathbf{B}_i\left(\xi_k\right)}{\xi_k^{\ell_i}A_i\left(\xi_k\right)}, \\
& \hspace{-3em}=\frac{1}{\xi_k^{\ell_i}A_i\left(\xi_k\right)}\left(\xi_k^{\ell_i}A_i\left(\xi_k\right) \tilde{\mathbf{G}}_i\left(\omega_k,\boldsymbol{\beta}\right)-\mathbf{B}_i\left(\xi_k\right)\right), \label{eq: FREQ - subproblems}
\end{align}
with $\tilde{\mathbf{G}}_i$ defined in \eqref{eq: FREQ - residual plant}. Substituting the numerator and denominator polynomials \eqref{eq: FREQ - plant polynomial A} and \eqref{eq: FREQ - plant polynomial B}, and vectorizing both sides, \eqref{eq: FREQ - subproblems} yields
\begin{align}
\operatorname{vec}\bigl(\mathbf{E}\left(\omega_k, \boldsymbol{\beta}\right)\bigr) &=  \frac{\tilde{\mathbf{g}}_i\left(\omega_k, \boldsymbol{\beta}\right)}{A_i\left(\xi_k\right)} + \ldots + \frac{a_{n_i}\xi_k^{n_i} \tilde{\mathbf{g}}_i\left(\omega_k, \boldsymbol{\beta}\right)}{A_i\left(\xi_k\right)} \notag \\
& \hspace{-0em} -\frac{\operatorname{vec}\left(\mathbf{B}_{i, 0}\right)}{\xi_k^{\ell_i}A_i\left(\xi_k\right)}-\ldots-\frac{\xi_k^{m_i} \operatorname{vec}\left(\mathbf{B}_{i, m}\right)}{\xi_k^{\ell_i}A_i\left(\xi_k\right)}.
\end{align}
This expression can directly be written in the form \eqref{eq: FREQ - subproblems regression form} by considering \eqref{eq: FREQ - parameter vector submodel}, thereby completing the proof.}
\end{proof}

The residual formulation in \eqref{eq: FREQ - subproblems regression form} defines $K$ pseudolinear regressions. Introducing the stacked signals
\begin{align}
    \mathbf{\Upsilon}\left(\omega_k, \boldsymbol{\beta}\right) &= \Bigl[ \begin{array}{ccc}  \tilde{\mathbf{g}}_{f,1}(\omega_k,\boldsymbol{\beta}) & \dots & \tilde{\mathbf{g}}_{f,K}(\omega_k,\boldsymbol{\beta})\end{array}\Bigl]^\top, \label{eq: FREQ - stacked residual output} \\ 
    \mathbf{\Phi}\left(\omega_k, \boldsymbol{\beta}\right) &= \Bigl[ \begin{array}{ccc} \mathbf{\Phi}^\top_{1}\left(\omega_k, \boldsymbol{\beta}\right) & \dots & \mathbf{\Phi}^\top_{K}\left(\omega_k, \boldsymbol{\beta}\right)\end{array}\Bigr]^\top \label{eq: FREQ - stacked regressor matrix},
\end{align}
and the parameter matrix
\begin{equation} \label{eq: FREQ - parameter matrix}
\mathcal{B}=\left[\begin{array}{ccc}
\boldsymbol{\theta}_1 & & \mathbf{0} \\
& \ddots & \\
\mathbf{0} & & \boldsymbol{\theta}_K
\end{array}\right],
\end{equation}
which contains the elements of $\boldsymbol{\beta}$ along the block diagonal, allows to write the equivalent optimality condition \eqref{eq: FREQ - first-order optimality} for the $K$ subproblem as
\begin{equation} \label{eq: FREQ - instrumental variable equation - matrix}
\begin{aligned}
    \sum_{k=1}^N \Re \left\{\hat{\mathbf{\Phi}}(\omega_k, \boldsymbol{\beta})\mathbf{W}(\omega_k) \Bigl( \mathbf{\Upsilon}^\top(\omega_k, \boldsymbol{\beta}) - \mathbf{\Phi}^\top(\omega_k,\boldsymbol{\beta})\mathcal{B} \Bigr) \right\} = \mathbf{0}.
\end{aligned}
\end{equation}
The solution to \eqref{eq: FREQ - instrumental variable equation - matrix} is found iteratively by fixing $\boldsymbol{\beta} = \boldsymbol{\beta}^{\langle j \rangle}$ at the $j$th iteration in (\ref{eq: FREQ - stacked residual output}), the regressor (\ref{eq: FREQ - stacked regressor matrix}), and additionally the gradient (\ref{eq: FREQ - instrument matrix}), which leads to the following iterative procedure.
\begin{algorithm2} Given an initial estimate $\boldsymbol{\beta}^{\langle 0\rangle}$ and maximum number of iterations $M$, compute a new estimate by iterating
\begin{align} \label{eq: FREQ - estimator}
\hat{\mathcal{B}}^{\langle j+1 \rangle} = {\left[ \sum_{k=1}^N \hat{\mathbf{\Phi}}(\omega_k, \boldsymbol{\beta}^{\langle j \rangle}) \mathbf{W}(\omega_k)\mathbf{\Phi}^{\top}(\omega_k, \boldsymbol{\beta}^{\langle j \rangle})\right]^{-1} }  \times  \notag \\
\sum_{k=1}^N \hat{\mathbf{\Phi}}(\omega_k, \boldsymbol{\beta}^{\langle j \rangle}) \mathbf{W}(\omega_k)\mathbf{\Upsilon}^{\top}(\omega_k, \boldsymbol{\beta}^{\langle j \rangle}),
\end{align}
where the next iteration $\boldsymbol{\beta}^{j+1}$ is extracted from the block diagonal coefficients of $\mathcal{B}^{j+1}$ as in (\ref{eq: FREQ - parameter matrix}).
\end{algorithm2}
The convergence point of the iterations described by (\ref{eq: FREQ - estimator}) provides a solution to the first-order optimality condition (\ref{eq: FREQ - first-order optimality}). This ensures that the estimate corresponds to a stationary point of the cost function in (\ref{eq: FREQ - optimization problem}), thereby guaranteeing (local) optimality.\\

\begin{remark} Note that the iterations described by \eqref{eq: FREQ - estimator} corresponds to a refined instrumental variable method, where $\hat{\mathbf{\Phi}}$ is interpreted as the instrument matrix \citep{Young1980RefinedAnalysis}. Furthermore, for $K=1$, the iterations in (\ref{eq: FREQ - estimator}) correspond to the frequency-domain refined instrumental variable method in \cite{Blom2010MultivariableRegression}, and by replacing $\hat{\mathbf{\Phi}}$ with $\mathbf{\Phi}$ to the SK iterations by \cite{Sanathanan1961TransferPolynomials}. The method can be considered as a frequency-domain variant of the approach introduced in \cite{Gonzalez2024IdentificationClosed-loop} for MIMO systems.
\end{remark}

\subsection{Initialization}
The iterations in \eqref{eq: FREQ - estimator} require an initial estimate $\boldsymbol{\beta}^{\langle 0 \rangle}$ of the model parameters. This section introduces a method for computing the numerator parameters assuming fixed denominator polynomials. This
reduces the initialization problem to determining initial pole
locations, which are often effectively obtained from, e.g., finite element models or nonparametric FRF models. To this end, assume that the denominator polynomials are fixed at $\bar{A}_i(s)$, and let $\boldsymbol{\eta}$ represent the parameter vector from \eqref{eq: FREQ - parameter vector} without the denominator coefficients. The estimate $\hat{\boldsymbol{\eta}}$ is found as the solution to the convex problem
\begin{equation} \label{eq: convex problem}
    \hat{\boldsymbol{\eta}} = \underset{\boldsymbol{\eta}}{\arg \min }\frac{1}{2N}\sum^N_{k=1}\|\operatorname{vec}\bigl(\mathbf{G}(\omega_k)\bigr)-\boldsymbol{\Phi}^{\top}\left(\omega_k\right) \boldsymbol{\eta}\|_{2}^2,
\end{equation}
where the regressor matrix $\boldsymbol{\Phi}$ is obtained by stacking for each submodel
\begin{align} 
\mathbf{\Phi}_{i}\left(\omega_k\right) =&\ \left[ \dfrac{\mathbf{I}_{n_{\mathrm{u}}n_{\mathrm{y}}}}{\xi^{\ell_i}\bar{A}_i(\xi_k)},\dots,\ \dfrac{\xi^{m_i}\mathbf{I}_{n_{\mathrm{u}}n_{\mathrm{y}}}}{\xi^{\ell_i}\bar{A}_i(\xi_k)} \right]^{\top},
\end{align}
in the same way as \eqref{eq: FREQ - stacked regressor matrix}. Hence, an initial estimate $\boldsymbol{\beta}^{\langle 0 \rangle}$ is determined by first providing initial pole locations, which enable the computation of the numerator parameters by solving the convex problem \eqref{eq: convex problem} given data.

\section{Experimental validation}
This section presents the experimental validation of the introduced identification strategy, thereby providing contribution C2. The considered system is the prototype wafer-stage system introduced in Section 2.

\subsection{Model structure}
The input/output dynamics of the wafer-stage system containing $n_{\mathrm{y}}=13$ outputs and $n_{\mathrm{u}}=4$ inputs, is modeled in the additive structure
\begin{equation} \label{eq: FREQ - model structure}
    \mathbf{P}(s, \boldsymbol{\beta}) =  \frac{\mathbf{B}_{i,0}}{s^2}  + \sum_{i=1}^{n_{\mathrm{flex}}} \frac{\mathbf{B}_{i,0}}{s^2/ \omega_{i}^{2} + 2(\zeta_i/\omega_i)s + 1},
\end{equation}
where the components of the decomposition are interpreted as rigid-body modes and flexible dynamic modes, with $\omega_i$ the resonance frequencies, $\zeta_i$ the corresponding damping coefficients, and $n_{\mathrm{flex}}$ the number of flexible modes \citep{Gawronski2004AdvancedStructures}. 

\subsection{Nonparametric modeling} As a first step in frequency-domain identification, a nonparametric model needs to be identified. The nonparametric FRF model of the $(4 \times 13)$ plant, representing the out-of-plane dynamics of the wafer-stage system, is obtained using the multi-sine approach described in \cite{Pintelon2012SystemIdentification}. The experiments are performed in a closed-loop configuration since active control of the mid-air equilibrium is required for stable operation. The plant FRF is derived using the indirect method, where the system is excited by $n_{\mathrm{u}} $ single-axis random-phase multisine signals with a flat amplitude spectrum. The multi-sine excitation includes $ 10 $ periods and $ 10 $ realizations, resulting in a plant FRF consisting of $ N = 4000 $ complex data points spanning a frequency range of 0.25 Hz to 2000 Hz. Frequency lines below 20 Hz are discarded during the parametric estimation step, as the rigid-body behavior is poorly captured at lower frequencies in the measurement. The delays introduced by the hold circuit in the digital measurement environment are determined based on the FRF model. The dataset is then compensated for these delays, allowing the delay-corrected FRF to be modeled in continuous time \cite[Chapter 8]{Wang2008IdentificationData}.

\subsection{Weighting filter design}
For the weighting filter an element-wise inverse plant magnitude weighting is selected, given by 
\begin{equation}
    \mathbf{W}(\omega_k) = \operatorname{diag}\Bigl(\operatorname{vec}\left(|\mathbf{G}(\omega_k)|\right)\Bigr)^{-1}.
\end{equation}
The inverse plant magnitude weighting effectively transforms the matrix residual \eqref{eq: FREQ - residual} from absolute to relative error criterion. This prevents overemphasizing frequencies with a large magnitude, which can dominate the estimation process, especially for systems containing integrator dynamics. 

\subsection{Initialization and parametric identification}
To determine the number of flexible modes, $n_{\mathrm{flex}}$, and their corresponding frequency locations, the Complex Mode Indicator Function (CMIF) is used, as outlined in \cite{Shih1988ComplexEstimation}. The CMIF is computed as the square of the singular values of the FRF matrix evaluated at each frequency point. A mode is indicated by a peak in the CMIF, with the frequency location of the peak corresponding to the damped natural frequency of the flexible mode. Since the setup is a lightly damped system, the damped natural frequency is approximately equal to the natural frequency, and therefore provides for an accurate initial estimate of the frequency location. 

In Figure \ref{fig: cmif}, the CMIF of the FRF dataset is provided. Using this approach, $n_{\mathrm{flex}}=17$ distinct flexible modes are found in the frequency range considered. The frequency locations of the peaks in the CMIF are used to initialize the natural frequencies $\omega_i$. The corresponding damping coefficients are initialized at $\zeta_i = 0.01$ for $i = 1,\ldots, n_{\mathrm{flex}}$, which are typical values encountered for lightly damped poles in these systems. The initial modal parameters determine the pole locations, which enables the numerator parameters to be computed using the convex problem \eqref{eq: convex problem}. Finally, the initial parameter vector $\boldsymbol{\beta}^{\langle0\rangle}$ is constructed as in (\ref{eq: FREQ - parameter vector}) and is used to initialize the iterations described by \eqref{eq: FREQ - estimator}. 

\subsection{Results}
The frequency response of the estimated plant model is shown in Figure \ref{fig: oat_fit}, together with the nonparametric FRF measurement used in the estimation. Furthermore, Figure \ref{fig: oat_fit single} shows the frequency response of a single plant entry, along with the corresponding residual. The parametric model accurately aligns with the FRF measurement over the complete frequency range, demonstrating the validity of the introduced method. In particular, the high-frequent flexible modes are accurately modeled in the additive structure, which can be challenging to achieve using traditional model structures.

\begin{figure}
    \centering
    \includegraphics[width=\columnwidth]{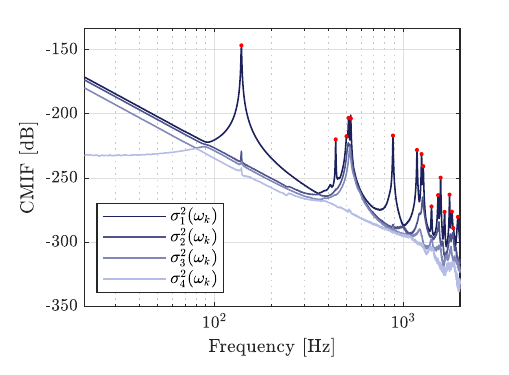}
    \caption{CMIF plot with $\sigma_i(\omega_k)$ the $i$th singular value of the FRF and \tikzmark{red} indicating the selected modes.}
    \label{fig: cmif}
\end{figure}

\begin{figure*}
    \centering    \includegraphics[width=\linewidth]{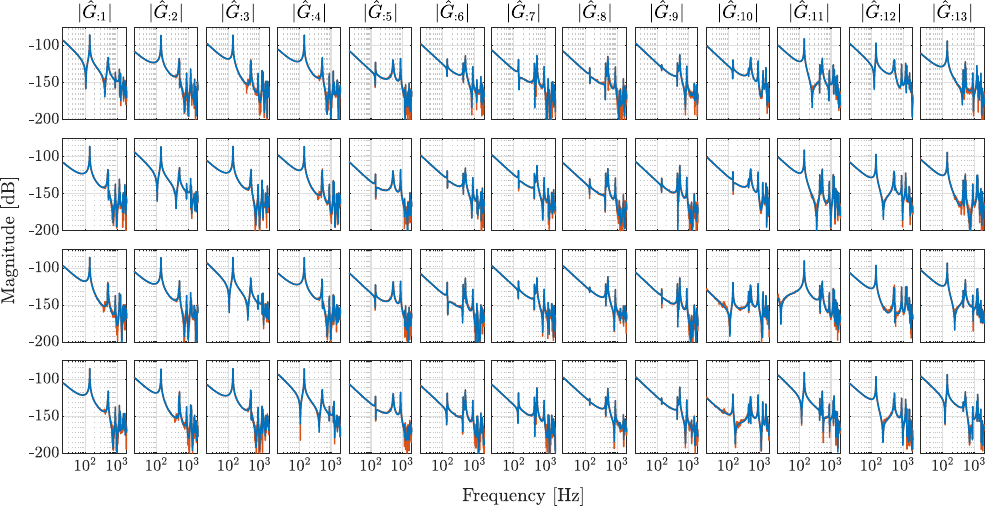}
    \caption{Element-wise Bode magnitude plot of the the nonparametric FRF measurement \tikzline{MatlabOrange} and the estimated parametric model \tikzline{MatlabBlue}.}
    \label{fig: oat_fit}
\end{figure*}

\begin{figure}
    \centering
    \includegraphics[width=\columnwidth]{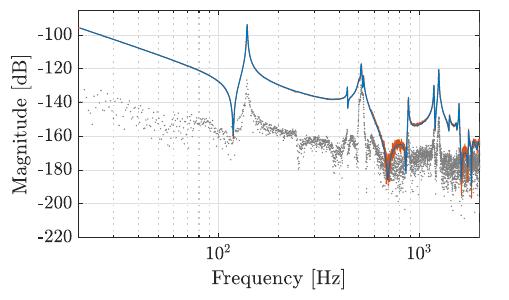}
    \caption{Bode magnitude plot of the $\hat{G}_{4,13}$ entry, with the nonparametric FRF measurement \tikzline{MatlabOrange}, the estimated parametric model \tikzline{MatlabBlue} and residual \tikzmark{MatlabGray}. }
    \label{fig: oat_fit single}
\end{figure}

\section{Conclusion}
This paper addresses the parametric identification of multivariable systems using frequency-domain datasets. 
The introduced method, which uses an iterative regression algorithm to minimize a least-squares criterion, enables direct estimation of additive transfer function models. Many systems are more naturally described in an additive structure, leading to reduced complexity models, improved conditioning, and enhanced physical insight. The procedure has been successfully tested on a prototype wafer-stage system, providing accurate models over a large frequency range.

\renewcommand{\bibfont}{\footnotesize} 

\bibliography{references}             % bib file to produce the bibliography

\begin{thebibliography}{29}
\providecommand{\natexlab}[1]{#1}
\providecommand{\url}[1]{\texttt{#1}}
\providecommand{\urlprefix}{URL }
\expandafter\ifx\csname urlstyle\endcsname\relax
  \providecommand{\doi}[1]{doi:\discretionary{}{}{}#1}\else
  \providecommand{\doi}{doi:\discretionary{}{}{}\begingroup \urlstyle{rm}\Url}\fi

\bibitem[{Bayard(1994)}]{Bayard1994High-orderResults}
Bayard, D.S. (1994).
\newblock {High-order multivariable transfer function curve fitting: Algorithms, sparse matrix methods and experimental results}.
\newblock \emph{Automatica}, 30(9), 1439--1444.

\bibitem[{Blom and Van Den~Hof(2010)}]{Blom2010MultivariableRegression}
Blom, R.S. and Van Den~Hof, P. (2010).
\newblock {Multivariable frequency domain identification using IV-based linear regression}.
\newblock In \emph{Proceedings of the IEEE Conference on Decision and Control}, 1148--1153.

\bibitem[{Classens et~al.(2024)Classens, Gonz{\'{a}}lez, and Oomen}]{Classens2024RecursiveSystems}
Classens, K., Gonz{\'{a}}lez, R.A., and Oomen, T. (2024).
\newblock {Recursive Identification of Structured Systems: An Instrumental-Variable Approach Applied to Mechanical Systems}.
\newblock \emph{Submitted for journal publication}.

\bibitem[{Classens et~al.(2022)Classens, Mostard, Van De~Wijdeven, {W.P.M.H. Heemels}, and Oomen}]{Classens2022FaultResonances}
Classens, K., Mostard, M., Van De~Wijdeven, J., {W.P.M.H. Heemels}, and Oomen, T. (2022).
\newblock {Fault Detection for Precision Mechatronics: Online Estimation of Mechanical Resonances}.
\newblock In \emph{IFAC-PapersOnLine}, volume~55, 746--751. Elsevier B.V.

\bibitem[{Correa and Glover(1984)}]{Correa1984Pseudo-canonicalModels}
Correa, G.O. and Glover, K. (1984).
\newblock {Pseudo-canonical Forms, Identifiable Parametrizations and Simple Parameter Estimation for Linear Multivariable Systems: Input-Output Models}.
\newblock \emph{Automatica}, 20(4), 429--442.

\bibitem[{Dorosti et~al.(2018)Dorosti, Fey, Heertjes, and Nijmeijer}]{Dorosti2018IterativeApproach}
Dorosti, M., Fey, R.H., Heertjes, M.F., and Nijmeijer, H. (2018).
\newblock {Iterative Pole–Zero model updating: A combined sensitivity approach}.
\newblock \emph{Control Engineering Practice}, 71, 164--174.

\bibitem[{Garnier(2015)}]{Garnier2015DirectApplications}
Garnier, H. (2015).
\newblock {Direct continuous-time approaches to system identification. Overview and benefits for practical applications}.
\newblock In \emph{European Journal of Control}, volume~24, 50--62.

\bibitem[{Gawronski(2004)}]{Gawronski2004AdvancedStructures}
Gawronski, W. (2004).
\newblock \emph{{Advanced Structural Dynamics and Active Control of Structures}}.
\newblock Springer.

\bibitem[{Gilson et~al.(2018)Gilson, Welsh, and Garnier}]{Gilson2018AIdentification}
Gilson, M., Welsh, J.S., and Garnier, H. (2018).
\newblock {A frequency localizing basis function-based IV method for wideband system identification}.
\newblock \emph{IEEE Transactions on Control Systems Technology}, 26(1), 329--335.

\bibitem[{Glover and Willems(1974)}]{Glover1974ParametrizationsIdentifiability}
Glover, K. and Willems, J.C. (1974).
\newblock {Parametrizations of Linear Dynamical Systems: Canonical Forms and Identifiability}.
\newblock \emph{IEEE Transactions on Automatic Control}, 19(6), 640--646.

\bibitem[{Gonz{\'{a}}lez et~al.(2024)Gonz{\'{a}}lez, Classens, Rojas, Welsh, and Oomen}]{Gonzalez2024IdentificationClosed-loop}
Gonz{\'{a}}lez, R.A., Classens, K., Rojas, C.R., Welsh, J.S., and Oomen, T. (2024).
\newblock {Identification of additive continuous-time systems in open and closed-loop}.
\newblock \emph{Automatica}, 173(112013).

\bibitem[{Gonz{\'{a}}lez et~al.(2023)Gonz{\'{a}}lez, Rojas, Pan, and Welsh}]{Gonzalez2023ParsimoniousApproach}
Gonz{\'{a}}lez, R.A., Rojas, C.R., Pan, S., and Welsh, J.S. (2023).
\newblock {Parsimonious identification of continuous-time systems: A block-coordinate descent approach}.
\newblock In \emph{IFAC World Congress}, 56(2), 4216--4221.

\bibitem[{Jian et~al.(2022)Jian, Chen, and Bai}]{Jian2022AcousticInterpolation}
Jian, H.M., Chen, Y.S., and Bai, M.R. (2022).
\newblock {Acoustic modal analysis of room responses from the perspective of state-space balanced realization with application to field interpolation}.
\newblock \emph{The Journal of the Acoustical Society of America}, 152(1), 240--250.

\bibitem[{Lange and Leone(2021)}]{Lange2021BroadbandShape}
Lange, C. and Leone, M. (2021).
\newblock {Broadband Circuit Model for EMI Analysis of Complex Interconnection Networks in Metallic Enclosures of Arbitrary Shape}.
\newblock \emph{IEEE Transactions on Electromagnetic Compatibility}, 63(2), 474--483.

\bibitem[{Ljung(1999)}]{Ljung1999SystemUser}
Ljung, L. (1999).
\newblock \emph{{System Identification: Theory For The User}}.
\newblock Prentice Hall, second edition.

\bibitem[{Pintelon and Schoukens(2012)}]{Pintelon2012SystemIdentification}
Pintelon, R. and Schoukens, J. (2012).
\newblock \emph{{System Identification}}.
\newblock John Wiley {\&} Sons, New Jersey, second edition.

\bibitem[{Sanathanan and Koerner(1961)}]{Sanathanan1961TransferPolynomials}
Sanathanan, C. and Koerner, J. (1961).
\newblock {Transfer function synthesis as a ratio of two complex polynomials}.
\newblock Technical report, Atomic Energy Commission, Ohio.

\bibitem[{Semlyen(1999)}]{Semlyen1999RationalFitting}
Semlyen, A. (1999).
\newblock {Rational approximation of frequency domain responses by vector fitting}.
\newblock \emph{IEEE Transactions on Power Delivery}, 14(3).

\bibitem[{Shih et~al.(1988)Shih, Tsuei, Allemang, and Brown}]{Shih1988ComplexEstimation}
Shih, C.Y., Tsuei, Y.G., Allemang, R.J., and Brown, D.L. (1988).
\newblock {Complex mode indication function and its applications to spatial domain parameter estimation}.
\newblock Technical Report~4.

\bibitem[{S{\"{o}}derstr{\"{o}}m and Stoica(2001)}]{Soderstrom2001SystemIdentification}
S{\"{o}}derstr{\"{o}}m, T. and Stoica, P. (2001).
\newblock \emph{{System Identification}}.
\newblock Prentice Hall.

\bibitem[{Steinbuch et~al.(2022)Steinbuch, Oomen, and Vermeulen}]{Steinbuch2022MotionLaw}
Steinbuch, M., Oomen, T., and Vermeulen, H. (2022).
\newblock {Motion Control, Mechatronics Design, and Moore's Law}.
\newblock \emph{IEEJ Journal of Industry Applications}, 11(2), 245--255.

\bibitem[{Tacx et~al.(2024)Tacx, Habraken, Witvoet, Heertjes, and Oomen}]{Tacx2024IdentificationOutputs}
Tacx, P., Habraken, R., Witvoet, G., Heertjes, M., and Oomen, T. (2024).
\newblock {Identification of an overactuated deformable mirror system with unmeasured outputs}.
\newblock \emph{Mechatronics}, 99, 103158.

\bibitem[{Vayssettes and Merc{\`{e}}re(2015)}]{Vayssettes2015NewStructures}
Vayssettes, J. and Merc{\`{e}}re, G. (2015).
\newblock {New developments for experimental modal analysis of aircraft structures}.
\newblock In \emph{MATEC Web of Conferences}, volume~20. EDP Sciences.

\bibitem[{Vayssettes et~al.(2016)Vayssettes, Merc{\`{e}}re, and Prot}]{Vayssettes2016NewApproach}
Vayssettes, J., Merc{\`{e}}re, G., and Prot, O. (2016).
\newblock {New developments for matrix fraction descriptions: A fully-parametrised approach}.
\newblock \emph{Automatica}, 66, 15--24.

\bibitem[{Voorhoeve et~al.(2021)Voorhoeve, De~Rozario, Aangenent, and Oomen}]{Voorhoeve2021IdentifyingStage}
Voorhoeve, R., De~Rozario, R., Aangenent, W., and Oomen, T. (2021).
\newblock {Identifying position-dependent mechanical systems: A modal approach applied to a flexible wafer stage}.
\newblock \emph{IEEE Transactions on Control Systems Technology}, 29(1), 194--206.

\bibitem[{Wang and Garnier(2008)}]{Wang2008IdentificationData}
Wang, L. and Garnier, H. (2008).
\newblock \emph{{Identification of Continuous-time Models from Sampled Data}}.
\newblock Advances in Industrial Control. Springer London.

\bibitem[{Young and Jakeman(1980)}]{Young1980RefinedAnalysis}
Young, P. and Jakeman, A. (1980).
\newblock {Refined instrumental variable methods of recursive time-series analysis}.
\newblock \emph{International Journal of Control}, 31, 741--764.

\bibitem[{Zhang and Xu(2019)}]{Zhang2019VariationalEngineering}
Zhang, M. and Xu, F. (2019).
\newblock {Variational mode decomposition based modal parameter identification in civil engineering}.
\newblock \emph{Frontiers of Structural and Civil Engineering}, 13(5), 1082--1094.

\bibitem[{Zhu et~al.(2008)Zhu, Ni, and Shih}]{Zhu2008RobustConcept}
Zhu, J., Ni, J., and Shih, A.J. (2008).
\newblock {Robust machine tool thermal error modeling through thermal mode concept}.
\newblock \emph{Journal of Manufacturing Science and Engineering}, 130(6), 0610061--0610069.

\end{thebibliography}
                                                     % with bibtex (preferred)
                                                   
%\begin{thebibliography}{xx}  % you can also add the bibliography by hand

%\bibitem[Able(1956)]{Abl:56}
%B.C. Able.
%\newblock Nucleic acid content of microscope.
%\newblock \emph{Nature}, 135:\penalty0 7--9, 1956.

%\bibitem[Able et~al.(1954)Able, Tagg, and Rush]{AbTaRu:54}
%B.C. Able, R.A. Tagg, and M.~Rush.
%\newblock Enzyme-catalyzed cellular transanimations.
%\newblock In A.F. Round, editor, \emph{Advances in Enzymology}, volume~2, pages
%  125--247. Academic Press, New York, 3rd edition, 1954.

%\bibitem[Keohane(1958)]{Keo:58}
%R.~Keohane.
%\newblock \emph{Power and Interdependence: World Politics in Transitions}.
%\newblock Little, Brown \& Co., Boston, 1958.

%\bibitem[Powers(1985)]{Pow:85}
%T.~Powers.
%\newblock Is there a way out?
%\newblock \emph{Harpers}, pages 35--47, June 1985.

%\bibitem[Soukhanov(1992)]{Heritage:92}
%A.~H. Soukhanov, editor.
%\newblock \emph{{The American Heritage. Dictionary of the American Language}}.
%\newblock Houghton Mifflin Company, 1992.

%\end{thebibliography}
                                          % in the appendices.
\end{document}